\newtheorem{proposition}{Proposition}
\newcommand{\widesim}[2][1.5]{
	\mathrel{\overset{#2}{\scalebox{#1}[1]{$\sim$}}}
}
\definecolor{ForestGreen}{RGB}{34,139,34}
\newtheorem{theorem}{Theorem}
\begin{document}	
	\title{Does an IRS Degrade Out-of-Band Performance?}
	\author{L. Yashvanth,~\IEEEmembership{Student Member,~IEEE}, and Chandra R. Murthy,~\IEEEmembership{Fellow,~IEEE}
	\thanks{The authors are with the Dept. of ECE, Indian Institute of Science, Bengaluru, India 560 012. (E-mails: \{yashvanthl, cmurthy\}@iisc.ac.in).}
	\thanks{\textcolor{black}{This work was supported in parts by the Qualcomm Innovation Fellowship, a research grant from MeitY, Govt. of India, and the Prime Minister’s Research Fellowship, Govt. of India.}}
	}
\maketitle
\begin{abstract}
Intelligent reflecting surfaces (IRSs) were introduced to enhance the performance of wireless systems. However, from a cellular service provider’s view, a concern with the use of an IRS is its effect on out-of-band (OOB) quality of service. Specifically, given two operators, say X and Y, providing services in a geographical area using non-overlapping frequency bands, if operator-X uses an IRS to optimally enhance the throughput of its users, does the IRS degrade the performance of operator-Y? We answer this by deriving the ergodic sum spectral efficiency (SE) of both operators under round-robin scheduling. We also derive the complementary cumulative distribution function of the change in effective channel at an OOB user with and without the IRS, which provides deeper insights into OOB performance. Surprisingly, we find that even though the IRS is randomly configured from operator-Y’s view, the OOB operator still benefits from the IRS, witnessing a performance enhancement for free. This happens because the IRS introduces additional paths between the nodes, increasing the signal power at the receiver and providing diversity benefits. We verify our findings numerically and conclude that an IRS is beneficial to every operator, even when the IRS is deployed to optimally serve only one operator.
 \end{abstract}
\begin{IEEEkeywords}
	Intelligent reflecting surfaces, Out-of-band performance.
\end{IEEEkeywords}
\section{Introduction}
Intelligent reflecting surfaces (IRS) have been extensively studied in the literature as a means to enhance the performance of both indoor and outdoor wireless systems~\cite{Basar_IA_2019,RuiZhang_IRSSurvey_TCOM_2021}. An IRS is a passive electro-magnetic surface which comprises of IRS elements made of meta-materials. The IRS elements can introduce a small delay/phase shift in the radio frequency (RF) signal impinging on it before reflecting it, thereby allowing the IRS as a whole to steer the signal in any desired direction. This can be achieved by appropriately configuring the reflection coefficient at every IRS element. Further, due to the passive nature of IRS, the performance of an IRS aided system is enhanced while maintaining the energy efficiency~\cite{Huang_TWC_2019}. In order to obtain the professed benefits of an IRS, every IRS element needs to be configured to introduce a phase shift in the signal that is optimal to the scheduled user equipment (UE) in terms of a metric of interest, e.g., the signal-to-noise ratio (SNR). This, however, requires the knowledge of the channel state information (CSI) of all the links from the base station (BS) to the user through every IRS element~\cite{Basar_IA_2019,RuiZhang_IRSSurvey_TCOM_2021}.\\
\indent In practice, multiple network operators co-exist in a given geographical area, each operating in different frequency band. As a consequence, at a given point in time, multiple UEs are served by different operators in the system. In such a scenario, if an IRS is optimized to cater to the needs of one of the  operators, it is not clear whether the IRS will boost or degrade the performance of the other operators in the system. In particular, since the IRS elements are passive, they will reflect the RF signals impinging on them in all frequency bands. So, it is important to understand how an IRS which is controlled by only one operator affects the performance of other operators (called as \emph{out-of-band operator} in this paper).  Although a very few works consider the scenario of the presence of an IRS in multi-band systems~\cite{Cai_2022_TCOM,Cai_2020_CL_practicalIRS}, these works proceed along the lines of jointly optimizing the IRS phase configurations among all the operators. This approach requires inter operator co-ordination, which is not practical. Moreover, the solutions and analysis provided in these works are not scalable with number of operators (or frequency bands) in the system. More fundamentally, none of these works address the question of the out-of-band (OOB) performance even in the scenario of two operators operating in non-overlapping bands and the IRS is optimized for only one operator. In this paper, we address this question, and to the best of our knowledge, this is the first work which considers the effect of OOB performance due to the presence of an IRS under practical cellular network deployment scenarios.\\ 
\indent We consider a system with two network operators providing service in non-overlapping  frequency bands. We  analyze the OOB throughput performance in the presence of an IRS that is optimized to serve the users subscribed to an operator offering wireless services in a different frequency band. Specifically,
\begin{itemize}
\vspace{-0.05cm}
	\item  We derive the ergodic sum spectral efficiencies (SE) of the two operators as a function of the number of IRS elements, under round-robin  scheduling of UEs. We show that the ergodic sum-SE scales quadratically and linearly with the number of IRS elements for the in-band and OOB networks, respectively, even when the OOB operator has no control over the IRS in the environment.
	\item We provide an exact characterization of the complementary cumulative distribution function (CCDF) of the difference in the channel gain at an OOB UE with and without the IRS. We determine the probability with which the difference is non-negative as a function of the number of IRS elements, and show that the channel gain with an IRS \emph{stochastically dominates} the gain without the IRS. Further, the difference in the channel gains with and without the IRS is an increasing function of the number of IRS elements. This confirms that even an OOB UE witnesses benefits that monotonically increase with the number of IRS elements.
	\item Through numerical simulations, we illustrate that an IRS \emph{enhances} the OOB performance, i.e., the presence of an IRS is beneficial even if its reflection coefficients are chosen randomly from the OOB operator's viewpoint. 
\end{itemize}  

Our results show that deploying an IRS not only improves the throughput of the operator who controls the IRS phase configuration to optimally serve its own users, but also enhances the throughput of users associated with an OOB operator who has no control over the IRS, albeit by a smaller amount compared to the in-band users. Furthermore, in rich scattering environments, the throughput enhancement increases with the number of IRS elements deployed. Thus, deploying an IRS enriches the wireless channels, and thereby benefits all wireless operators in the area.
\section{System Model}\label{sec:ch_model}
We consider two mobile network operators X and Y who provide service to $K$ and $Q$ UEs, respectively. The UEs are arbitrarily distributed over a single cell covering the same geographical area, and operators X and Y use non-overlapping frequency bands. 
The base stations (BSs) of operators X and Y (referred to as BS-X and BS-Y, respectively) and the UEs are equipped with a single antenna, and all the channels in the systems undergo frequency-flat fading.\footnote{Extension to  general cases with multiple antennas and frequency selective channels does not change the main message, and is left to future work.} An $N$-element IRS is deployed by operator X in order to enhance the quality of service (QoS) to the UEs being served by it. That is, operator X configures the IRS with the optimal phase configuration for a UE scheduled by BS-X in every time slot. 
We model the downlink signal received at the $k$th UE (served by BS-X) as
\begin{equation}\label{eq:airtel_downlink}
\vspace{-0.05cm}
	y_{k} = \left(h_{d,k}+\mathbf{g}_{k}^T\boldsymbol{\Theta}\mathbf{f}^X\right)x_k + n_k,
\vspace{-0.05cm}
\end{equation} where $\mathbf{g}_k \in \mathbb{C}^{N \times 1}$ represents the channel from the IRS to the $k$th UE, $\mathbf{f}^X \in \mathbb{C}^{N \times 1}$ represents the channel from the BS-X to the IRS,  $\boldsymbol{\Theta} \in \mathbb{C}^{N \times N}$ is a diagonal matrix containing the IRS reflection coefficients of the form $e^{j\theta}$, and $h_{d,k}$ is the direct (non-IRS) path from the BS to the UE-$k$. Also, $x_k$ is the data symbol for UE-$k$ with average power $\mathbb{E}[|x_k|^2] = P$, and $n_k$ is the AWGN $\sim \mathcal{CN}(0,\sigma^2)$ at UE-$k$. Similarly, at UE-$q$ served by BS-Y, we have
\vspace{-0.1cm}
\begin{equation}\label{eq:jio_downlink}
\vspace{-0.1cm}
	y_{q} = \left(h_{d,q}+\mathbf{g}_{q}^T\boldsymbol{\Theta}\mathbf{f}^Y\right)x_q	+ n_q,
\end{equation} where the symbols have similar meanings as in~\eqref{eq:airtel_downlink}. Fig.~\ref{fig:Network_scenario} summarizes the considered network.
\begin{figure}[t!]
	\centering
	\includegraphics[width=\linewidth]{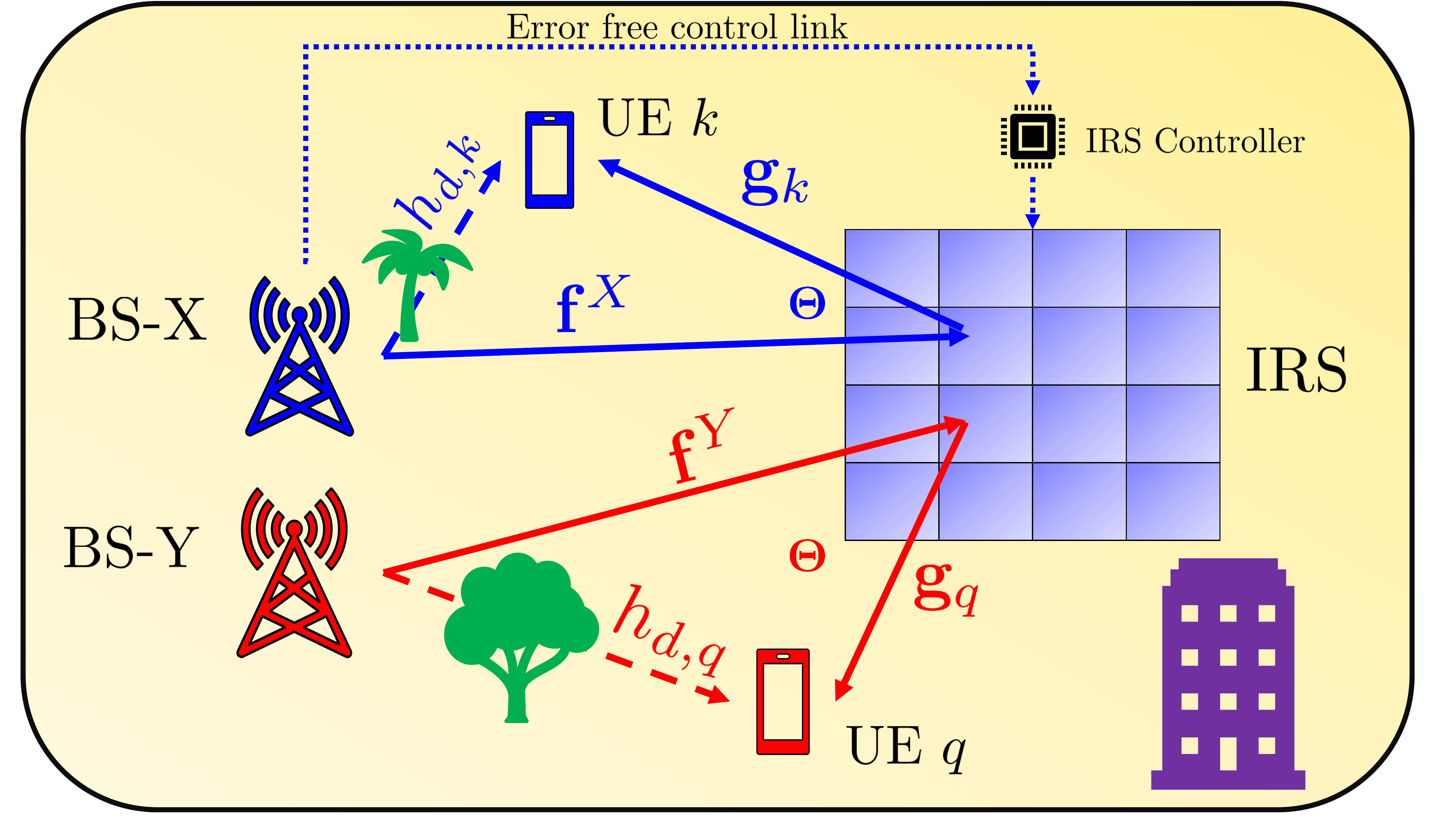}
	\caption{Network scenario of an IRS aided multiple-operator system.}
	\label{fig:Network_scenario}
	\vspace{-0.5cm}
\end{figure}
Similar to~\cite{Basar_TCOM_2022_iid}, we consider that all the fading channels in the system are statistically independent and follow the Rayleigh distribution.\footnote{\textcolor{black}{We consider that the IRS elements are placed sufficiently far apart so that the effect of channel spatial correlation is negligible.}} Specifically,	$h_{d,k} = \sqrt{\beta_{d,k}} \tilde{h}_{d,k}, h_{d,q} = \sqrt{\beta_{d,q}} \tilde{h}_{d,q}; \tilde{h}_{d,k}, \tilde{h}_{d,q}  \widesim[2]{\text{i.i.d.}} \mathcal{CN}(0,1)$; $\mathbf{g}_k = \sqrt{\beta_{\mathbf{g},k}} \tilde{\mathbf{g}}_k, \mathbf{g}_q = \sqrt{\beta_{\mathbf{g},q}} \tilde{\mathbf{g}}_q; \tilde{\mathbf{g}}_k,\tilde{\mathbf{g}}_q \widesim[2]{\text{i.i.d.}} \mathcal{CN}(\mathbf{0}, \mathbf{I}_N)$ ; $\mathbf{f}^X = \sqrt{\beta_{\mathbf{f}^X}} \tilde{\mathbf{f}}^X, \mathbf{f}^Y = \sqrt{\beta_{\mathbf{f}^Y}} \tilde{\mathbf{f}}^Y; \tilde{\mathbf{f}}^X,\tilde{\mathbf{f}}^Y \widesim[2]{\text{}} \mathcal{CN}(\mathbf{0}, \mathbf{I}_N)$. All terms of the form $\beta_x$ represent the pathloss factor in $x$th link.
\vspace{-0.2cm}
\section{Out-of-band Performance Analysis} \label{sec:OOB_perf_analysis}
As mentioned earlier, in this work, we consider a scenario where operator X deploys and controls an IRS in order to enhance the throughput of the users being served by it, and are interested in the effect of the IRS on an operator Y that is providing services in a different frequency band. Thus, in order to serve the $k$th UE, operator X configures the IRS with the rate-optimal phase angles~\cite{Basar_IA_2019,RuiZhang_IRSSurvey_TCOM_2021},\cite[Lemma~1]{Yashvanth_TSP_2023}
\begin{equation}
\theta ^{*}_{n,k} =\angle h_{d,k}- \left(\angle {f}^X_{n}+\angle{g}_{k,n}\right),\hspace{0.5cm} n=1, {\dots }, N, \label{eq:basic_optimal_angle_X}  
\end{equation}
which results in coherent addition of the signals along the direct path as well through all the IRS elements, leading to the maximum possible received SNR and hence data rate. The rate achieved by the $k$th UE is given by
\begin{equation} 
R_{k}^{BF} = \log _{2}\!\left( \! 1 + \frac {P}{\sigma ^{2}} \left| |h_{d,k}| + \sum_{n=1}^{N} |f^X_{n}{g_{k,n}}| \right|^{2}\right).
\vspace{-0.1cm}
\end{equation}
Now, due to the independence of the channels of the users served by operators X and Y, the phase configuration used by operator X to serve its own users appears as a \emph{random} phase configuration of the IRS for any UE served by operator~Y. In the sequel, we quantify the impact of the IRS on the throughput achieved by the users served by operator~Y which has no control over the phase configuration used at the IRS. 

In order to study the impact on the OOB performance, we consider the scheduling of UEs in a round-robin (RR) fashion at both BS-X and BS-Y. We note that the performance under opportunistic scheduling at either or both BSs can also be derived along similar lines, e.g., following the approach in~\cite{Yashvanth_TSP_2023}. Since the BSs are equipped with a single antenna, only one UE from each network is scheduled for data transmission in every time-slot. We characterize the OOB performance of the network by deriving the ergodic sum-SE of both the networks, and then infer the degree of degradation/enhancement of the OOB performance caused by the IRS to operator Y. The ergodic SE at UE-$k$ is 
\vspace{-0.1cm}
\begin{equation}\label{eq:basic_airtel_rate}
	\langle R_k^{(X)} \rangle = \mathbb{E}\!\left[\log_{2}\left(\!1 \!+\! \left|\sum\nolimits_{n=1}^N  |f^X_{n}{g_{k,n}}| 
		+   |h_{d,k}| \right|^2 \!\!\frac{P}{\sigma^2} \right)\!\right],
		\vspace{-0.1cm}
\end{equation} since the IRS is configured with the optimal phase configuration for (scheduled) UE-$k$.
On the other hand, the ergodic SE for (scheduled) UE-$q$ of operator Y is given by
\vspace{-0.1cm}
\begin{equation}\label{eq:basic_jio_rate}
	\langle R_q^{(Y)} \rangle = \mathbb{E}\!\left[\log_{2}\left(\!1\! +\! \left|\sum\nolimits_{n=1}^N  f^Y_{n}{g_{q,n}} 
		+   h_{d,q} \right|^2 \frac{P}{\sigma^2} \right)\right],
		\vspace{-0.1cm}
\end{equation} where we used the fact that the channels are circularly symmetric random variables, so  $f^Y_{n}g_{q,n}e^{j\theta_{n,k}} \stackrel{d}{=} f^Y_{n}g_{q,n}$ for any $\theta_{n,k}$. The expectations are taken with respect to the distribution of the channels to the respective UEs. With RR scheduling, the ergodic sum-SEs of two operators are given by
\vspace{-0.1cm}
\begin{align}\label{eq:sum-rate-template}
\vspace{-0.1cm}
	\bar{R}^{(X)}  \triangleq \frac{1}{K} \sum_{k=1}^K \langle R_k^{(X)} \rangle, \text{  and  }
	\bar{R}^{(Y)} \triangleq \frac{1}{Q}\sum_{q=1}^Q \langle R_q^{(Y)} \rangle.
	\vspace{-0.2cm}
\end{align}
Closed-form expressions for the ergodic sum-SEs are difficult to obtain due to the complicated distributions of the SNR terms in~\eqref{eq:basic_airtel_rate} and~\eqref{eq:basic_jio_rate}. \textcolor{black}{To explicitly capture the dependence of SE on the system parameters, we obtain an approximation by upper bounding the SE via Jensen's inequality. We show numerically in Fig.~\ref{fig:SE_txt_SNR} that the resulting bound is tight.} 
\vspace{-0.1cm}
\begin{theorem}\label{thm:rate_characterization}
	With RR scheduling, and under independent Rayleigh fading channels, the ergodic sum-SEs of the operators X and Y when the IRS is optimized to serve the UEs of operator X scale as
	\begin{multline}\label{eq:rate_airtel_rr}
	\vspace{-0.2cm}
		\bar{R}^{(X)}\ \textcolor{black}{\approx}  \  \frac{1}{K}\sum\nolimits_{k=1}^K\!  \log_{2}\left(1 + \! \left[N^2\left(\frac{\pi^2}{16} \beta_{r,k}\!\right)\!  \right. \right. \\ \left. \left. +  N\!\left(\!\beta_{r,k}\!-\!\frac{\pi^2}{16}\beta_{r,k}+\!\frac{\pi^{3/2}}{4}\sqrt{\beta_{d,k}\beta_{r,k}}\!\right)\! \!+\!\beta_{d,k}\right]\!\frac{P}{\sigma^2}\right)\!,
		\vspace{-0.5cm}
	\end{multline} where $\beta_{r,k} \triangleq \beta_{\mathbf{f}^X}\beta_{\mathbf{g},k}$, and
\vspace{-0.1cm}
	\begin{equation}\label{eq:rate_jio_rr}
	\vspace{-0.1cm}
		\bar{R}^{(Y)} \ \textcolor{black}{\approx} \ \frac{1}{Q}\sum\nolimits_{q=1}^Q  \log_{2}\left(1 + \left[N\beta_{r,q} + \beta_{d,q}\right]\frac{P}{\sigma^2} \right) ,
		\vspace{-0.15cm}
	\end{equation} where $\beta_{r,q} \triangleq \beta_{\mathbf{f}^Y}\beta_{\mathbf{g},q}$.
\end{theorem}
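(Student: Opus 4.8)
The plan is to derive both scaling laws as consequences of Jensen's inequality applied to the concave map $t \mapsto \log_2(1+t)$, which reduces each ergodic SE to the expected effective channel power. Applying this to \eqref{eq:basic_airtel_rate} and \eqref{eq:basic_jio_rate} gives $\langle R_k^{(X)}\rangle \le \log_2(1 + \frac{P}{\sigma^2}\,\mathbb{E}[(|h_{d,k}| + \sum_{n=1}^N |f^X_n g_{k,n}|)^2])$ and $\langle R_q^{(Y)}\rangle \le \log_2(1 + \frac{P}{\sigma^2}\,\mathbb{E}[|\sum_{n=1}^N f^Y_n g_{q,n} + h_{d,q}|^2])$, and averaging over $k$ and $q$ as in \eqref{eq:sum-rate-template} yields the stated forms once the two expectations are evaluated; the $\approx$ reflects exactly this Jensen step, whose tightness is verified numerically in Fig.~\ref{fig:SE_txt_SNR}.

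For operator Y, the summands $\{f^Y_n g_{q,n}\}_{n=1}^N$ and $h_{d,q}$ are mutually independent and each has zero mean, since a product of independent circularly symmetric Gaussians is zero-mean. Hence all cross terms vanish and the expectation collapses to a sum of variances: $\sum_{n=1}^N \mathbb{E}|f^Y_n|^2\,\mathbb{E}|g_{q,n}|^2 + \mathbb{E}|h_{d,q}|^2 = N\beta_{\mathbf{f}^Y}\beta_{\mathbf{g},q} + \beta_{d,q} = N\beta_{r,q} + \beta_{d,q}$, using the unit-variance normalization of the $\tilde{\cdot}$ channels. This is \eqref{eq:rate_jio_rr}.

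For operator X I would expand the square into $\mathbb{E}|h_{d,k}|^2$, a cross term $2\,\mathbb{E}|h_{d,k}| \sum_{n=1}^N \mathbb{E}|f^X_n g_{k,n}|$, and the quadratic term $\mathbb{E}(\sum_{n=1}^N |f^X_n g_{k,n}|)^2$. The first moments follow from the Rayleigh magnitude of a $\mathcal{CN}(0,1)$ variable, $\mathbb{E}|\tilde{h}| = \sqrt{\pi}/2$, so that $\mathbb{E}|h_{d,k}| = \frac{\sqrt{\pi}}{2}\sqrt{\beta_{d,k}}$ and $\mathbb{E}|f^X_n g_{k,n}| = \mathbb{E}|f^X_n|\,\mathbb{E}|g_{k,n}| = \frac{\pi}{4}\sqrt{\beta_{r,k}}$ by independence; the cross term is then $N\frac{\pi^{3/2}}{4}\sqrt{\beta_{d,k}\beta_{r,k}}$. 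For the quadratic term I would invoke the i.i.d.\ structure of $a_n \triangleq |f^X_n g_{k,n}|$ across $n$: $\mathbb{E}(\sum_n a_n)^2 = N\,\mathbb{E}[a_1^2] + N(N-1)(\mathbb{E}[a_1])^2$ with $\mathbb{E}[a_1^2] = \beta_{\mathbf{f}^X}\beta_{\mathbf{g},k} = \beta_{r,k}$ and $(\mathbb{E}[a_1])^2 = \frac{\pi^2}{16}\beta_{r,k}$. Collecting the $N^2$, $N$, and $N$-independent contributions and adding $\mathbb{E}|h_{d,k}|^2 = \beta_{d,k}$ reproduces \eqref{eq:rate_airtel_rr}.

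The only genuinely delicate part is the bookkeeping in the operator-X quadratic term: one must keep the $N$ diagonal terms separate from the $N(N-1)$ off-diagonal ones so that the $-\frac{\pi^2}{16}\beta_{r,k}$ piece lands in the linear-in-$N$ coefficient rather than being folded into the $N^2$ term, and likewise keep the cross term at order $N$. Everything else is elementary: moments of Rayleigh variables and of products of two independent complex Gaussians, together with the zero-mean/independence argument used for operator Y.
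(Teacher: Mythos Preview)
Your proposal is correct and follows essentially the same route as the paper's proof: apply Jensen's inequality to move the expectation inside the logarithm, then evaluate the resulting second moment by expanding the square, separating diagonal from off-diagonal terms, and plugging in the Rayleigh moments $\mathbb{E}|\tilde h|=\sqrt{\pi}/2$ and $\mathbb{E}|\tilde h|^2=1$. The only cosmetic difference is that the paper writes out the double sum explicitly before taking expectations, whereas you invoke the i.i.d.\ identity $\mathbb{E}(\sum_n a_n)^2 = N\,\mathbb{E}[a_1^2] + N(N-1)(\mathbb{E}[a_1])^2$ directly; the content is identical.
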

\vspace{-0.35cm}
\begin{proof}
	See Appendix~\ref{sec:app-1}.
\end{proof}
\vspace{-0.15cm}
We thus infer the following on the performance of an IRS aided system where several network operators co-exist in the same geographical area, serving  in different frequency bands:
	\begin{itemize}
		\item The IRS enhances the average received SNR by a factor of $N^2$ at any scheduled (in-band) UE of operator X when the IRS is optimized by BS-X. This is the benefit that operator X obtains by using an optimized $N$-element IRS.
		\item Operator Y, who does not control the IRS, also witnesses an enhancement of average SNR by a factor of $N$ free of cost, i.e., without any co-ordination with the IRS. This happens because the IRS makes the wireless environment more rich-scattering, and hence facilitates reception of multiple copies of the signals at the (out-of-band) UEs.
\end{itemize}
We now make the analysis more concrete by analyzing the stochastic behavior of the channels gains witnessed by an OOB UE with/without an IRS. Define the following random variables for an arbitrary OOB UE, say UE-$q$, served by BS-Y.
\vspace{-0.15cm}
\begin{equation}\label{eq:aux_RV_defn}
\vspace{-0.15cm}
\left|h_{1,q}\right|^2\triangleq \left|\sum_{n=1}^N  f^Y_{n}{g_{q,n}} 	+   h_{d,q} \right|^2;  \left|h_{2,q}\right|^2 \triangleq \left| h_{d,q} \right|^2.
\vspace{-0.1cm}
\end{equation} 
Note that $\left|h_{1,q}\right|^2$ and $\left|h_{2,q}\right|^2$ represent the channel power gain of UE-$q$ in the presence and absence  of the IRS, respectively. We  now characterize the change in the channel gain at UE-$q$ served by BS-Y in the presence and absence of the IRS as
\vspace{-0.1cm}
\begin{equation}\label{eq:snr_offset}
\vspace{-0.1cm}
	Z^{(Y)}_N \triangleq \left|h_{1,q} \right|^2 + \left(-{\mathbbm{1}_{\{N \neq 0\}}}\right) \left|h_{2,q} \right|^2.
\vspace{-0.1cm}
\end{equation} 
The random variable $Z^{(Y)}_N$ as defined above, provides a more conservative comparison of the instantaneous channel gains in the presence and absence of an $N$-element IRS over the entire support of their respective probability distributions. In fact, the event characterized by $Z^{(Y)}_N<0$ indicates the adverse effect on an OOB UE by the IRS, and in the sequel we prove that almost surely, $Z^{(Y)}_N$ becomes a non-negative random variable. Towards that end, we derive the CCDF of $Z^{(Y)}_N$ given by
\vspace{-0.1cm}
\begin{equation}\label{eq:ccdf_notation}
\vspace{-0.1cm}
	\text{CCDF}(Z^{(Y)}_N) \triangleq \bar{F}_{Z^{(Y)}_N}(z) = { \sf{Pr}}(Z^{(Y)}_N \geq z).
	\vspace{-0.1cm}
\end{equation} 
\begin{theorem}\label{thm:exact_ccdf}
The CCDF of the random variable $Z^{(Y)}_N$ when $N \textcolor{black}{(N>0)}$ is reasonably large\footnote{\textcolor{black}{In Sec.~\ref{sec:numerical_results}, we numerically show that the result in the theorem holds true even when $N$ is as small as $4$.}} is given by 
\vspace{-0.1cm}
\begin{equation}
\bar{F}_{Z^{(Y)}_N}(z)  = \left\{
\begin{array}{lr}
	1-\dfrac{1}{N\tilde{\beta}+2}\times e^{\left(\dfrac{z}{\beta_{d,q}}\right)}, \hspace{0.1cm} \text{ if } z < 0,\\
	\vspace{-0.4cm}
	\left(\dfrac{N\tilde{\beta}+1}{N\tilde{\beta}+2}\right)\times e^{-\left(\dfrac{z}{\beta_{d,q}\left(1+N\tilde{\beta}\right)}\right)}, \\
	\hfill \text{ if } z\geq 0.
\end{array}
\right.
\vspace{-0.3cm}
\end{equation} where $\tilde{\beta} \triangleq \dfrac{\beta_{r,q}}{\beta_{d,q}}$.
\end{theorem}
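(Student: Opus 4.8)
The plan is to reduce $Z_N^{(Y)}$ to a difference of two \emph{independent} exponential random variables and then obtain the CCDF by a one-dimensional convolution. Write $S_q\triangleq\sum_{n=1}^{N}f^Y_n g_{q,n}$, so that $h_{1,q}=h_{d,q}+S_q$ while $h_{2,q}=h_{d,q}$. Each summand $f^Y_n g_{q,n}$ is zero-mean, circularly symmetric (since $f^Y_n$ is), and, by independence of the two factors, has variance $\mathbb{E}[|f^Y_n|^2]\,\mathbb{E}[|g_{q,n}|^2]=\beta_{\mathbf{f}^Y}\beta_{\mathbf{g},q}=\beta_{r,q}$. Hence, for reasonably large $N$, the central limit theorem gives $S_q\approx\mathcal{CN}(0,N\beta_{r,q})$; being independent of $h_{d,q}\sim\mathcal{CN}(0,\beta_{d,q})$, this makes $h_{1,q}\approx\mathcal{CN}\big(0,\beta_{d,q}+N\beta_{r,q}\big)=\mathcal{CN}\big(0,\beta_{d,q}(1+N\tilde{\beta})\big)$. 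Therefore $|h_{1,q}|^2$ is (approximately) exponential with mean $\mu_1\triangleq\beta_{d,q}(1+N\tilde{\beta})$, while $|h_{2,q}|^2=|h_{d,q}|^2$ is exactly exponential with mean $\mu_2\triangleq\beta_{d,q}$; note $\mathbb{E}[|h_{1,q}|^2]=\beta_{d,q}+N\beta_{r,q}$ holds exactly, so retaining the $\beta_{d,q}$ term inside $\mu_1$ costs nothing.

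The second ingredient is to argue that $|h_{1,q}|^2$ and $|h_{2,q}|^2$ may be treated as independent. They are not, for finite $N$, since both contain $h_{d,q}$; but the normalized cross-correlation of the jointly Gaussian pair $(h_{1,q},h_{2,q})$ equals $|\mathbb{E}[h_{1,q}h_{2,q}^{*}]|/\sqrt{\mathbb{E}[|h_{1,q}|^2]\,\mathbb{E}[|h_{2,q}|^2]}=\beta_{d,q}/\sqrt{\beta_{d,q}\big(\beta_{d,q}+N\beta_{r,q}\big)}=(1+N\tilde{\beta})^{-1/2}$, which vanishes as $N$ grows. So for reasonably large $N$ the pair is approximately uncorrelated, and, being Gaussian, approximately independent. (Equivalently, one could keep the correlation and recognize $Z_N^{(Y)}=|h_{1,q}|^2-|h_{2,q}|^2$ as the Hermitian quadratic form $\mathbf{h}^{H}\mathrm{diag}(1,-1)\mathbf{h}$ in $\mathbf{h}=(h_{1,q},h_{2,q})^{T}$, diagonalize it to get $Z_N^{(Y)}\stackrel{d}{=}\lambda_{+}E_{1}-|\lambda_{-}|E_{2}$ with $E_1,E_2$ i.i.d.\ unit-mean exponentials and $\lambda_{\pm}$ the two eigenvalues of the relevant matrix, and expand for large $N$ to get $\lambda_{+}\to\mu_1$, $|\lambda_{-}|\to\mu_2$; this yields the same reduction, with an explicit handle on the error.)

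With $U$ exponential of mean $\mu_1$ and $V$ exponential of mean $\mu_2$, independent, so that $Z_N^{(Y)}\approx U-V$: for $z\geq0$,
\[
\bar F_{Z_N^{(Y)}}(z)=\int_{0}^{\infty}\Pr(U\geq z+v)\,\tfrac{1}{\mu_2}e^{-v/\mu_2}\,dv=e^{-z/\mu_1}\,\frac{\mu_1}{\mu_1+\mu_2},
\]
and substituting $\mu_1/(\mu_1+\mu_2)=(N\tilde{\beta}+1)/(N\tilde{\beta}+2)$ and $\mu_1=\beta_{d,q}(1+N\tilde{\beta})$ gives the upper branch. For $z<0$,
\[
\bar F_{Z_N^{(Y)}}(z)=1-\Pr(V-U>-z)=1-e^{z/\mu_2}\,\frac{\mu_2}{\mu_1+\mu_2},
\]
and $\mu_2/(\mu_1+\mu_2)=1/(N\tilde{\beta}+2)$, $\mu_2=\beta_{d,q}$ give the lower branch; the two agree at $z=0$, as they must, and $\Pr(Z_N^{(Y)}\geq0)\to1$ as $N\to\infty$, matching the claimed almost-sure nonnegativity.

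The only real obstacle is justifying the independent-exponential reduction, which is precisely where the hypothesis ``$N$ reasonably large'' is used twice over: once through the CLT that renders $h_{1,q}$ Gaussian (hence $|h_{1,q}|^2$ exponential), and once through the $O(N^{-1/2})$ decay of the correlation coefficient. Everything downstream is the routine exponential-difference computation shown above, and the quadratic-form viewpoint noted parenthetically is the cleanest way to make the large-$N$ approximation error quantitative if that is desired.
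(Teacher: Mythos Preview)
Your proof is correct and follows the same overall strategy as the paper: approximate $h_{1,q}$ as complex Gaussian via the CLT, observe that the dependence between $|h_{1,q}|^2$ and $|h_{2,q}|^2$ vanishes with $N$, and reduce $Z_N^{(Y)}$ to a difference of independent exponentials with means $\mu_1=\beta_{d,q}(1+N\tilde\beta)$ and $\mu_2=\beta_{d,q}$. The one technical divergence is in how the dependence is handled. The paper computes the Pearson correlation between the \emph{squared magnitudes} $|h_{1,q}|^2$ and $|h_{2,q}|^2$, obtaining $\rho_{12}=1/(1+N\tilde\beta)$, then invokes a closed-form result (from Simon's probability text) for the CDF of the difference of two correlated chi-square variables, and only afterward sets $\rho_{12}\to 0$ to reach the stated CCDF. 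You instead compute the correlation between the underlying \emph{complex Gaussians} $h_{1,q},h_{2,q}$ (getting $(1+N\tilde\beta)^{-1/2}$, whose square is precisely the paper's $\rho_{12}$, as it should be for jointly circular Gaussians), use joint Gaussianity to pass from vanishing correlation to approximate independence, and then obtain the exponential-difference CCDF by a direct one-line convolution. Your route is more self-contained, avoiding the external formula; the paper's route, via the exact correlated expression, in principle keeps the finite-$N$ correction due to residual correlation, though it immediately discards that information by taking $\rho_{12}\to 0$. Your parenthetical quadratic-form diagonalization is essentially the derivation underlying the cited formula, so the two approaches coincide at that level.
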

\begin{proof}
	See Appendix~\ref{sec:ccdf_proof}.
\end{proof}
\vspace{-0.2cm}
From the above theorem, we have 
$\bar{F}_{Z^{(Y)}_N}(0) = 1 - {1}/\left({2+N\tilde{\beta}}\right),$
i.e., for a given $\tilde{\beta}$, 
the probability that the SNR/gain offset in~\eqref{eq:snr_offset} is negative value decays as $\mathcal{O}\left({1}/{N}\right)$. 
Moreover, we see that $\bar{F}_{Z^{(Y)}_{N'}}(z) \geq \bar{F}_{Z^{(Y)}_{N''}}(z)$   $\forall z$ when $N' > N''$. 
Thus, we have the following proposition.
\begin{proposition}\label{prop:stochastic_dominance_sub_6}
For any $M,N \in \mathbb{N}\cup\{0\}$ with $M>N$, the random variable $Z^{(Y)}_{M}$ \emph{stochastically dominates}\footnote{A real-valued random variable $X$ is stochastically larger than, or stochastically dominates,  the random variable $Y$, written $X >_{st} Y$, if
	\vspace{-0.1cm}
	\begin{equation}
	\vspace{-0.1cm}
		{\sf {Pr}}(X>a) \geq {\sf {Pr}}(Y>a), \hspace{0.2cm} \text{for all }a.
\vspace{-0.05cm}
\end{equation} Note that, if the random variables $X$ and $Y$ have CCDFs $\bar{F}$ and $\bar{G}$, respectively, then $X >_{st} Y \Longleftrightarrow \bar{F}(a) \geq \bar{G}(a)$ for all $a$~\cite{Ross_2014_ProbabiltyBook}.} $Z^{(Y)}_{N}$. In particular, the OOB channel gain in the presence of the IRS stochastically dominates the channel gain in its absence.
\end{proposition}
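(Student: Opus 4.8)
The plan is to deduce the proposition directly from the closed-form CCDF of $Z^{(Y)}_N$ in Theorem~\ref{thm:exact_ccdf}, by establishing that for every fixed $z$ the map $N \mapsto \bar{F}_{Z^{(Y)}_N}(z)$ is non-decreasing, and then invoking the equivalence recalled in the footnote (cf.~\cite{Ross_2014_ProbabiltyBook}) that pointwise ordering of CCDFs is the same as the relation $>_{st}$. This turns ``$Z^{(Y)}_M$ stochastically dominates $Z^{(Y)}_N$ for $M>N$'' into a one-line consequence, and it is precisely the monotonicity asserted just before the proposition that I would need to verify with some care.

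First I would treat the two branches of $\bar{F}_{Z^{(Y)}_N}$ separately. For $z<0$ the CCDF equals $1-e^{z/\beta_{d,q}}/(N\tilde{\beta}+2)$; since $e^{z/\beta_{d,q}}>0$ and $\tilde{\beta}>0$, the subtracted term $1/(N\tilde{\beta}+2)$ is strictly decreasing in $N$, so this branch is strictly increasing in $N$. For $z\ge 0$ the CCDF is the product $\big(1-1/(N\tilde{\beta}+2)\big)\exp\!\big(-z/[\beta_{d,q}(1+N\tilde{\beta})]\big)$ of two strictly positive factors: the first is increasing in $N$ as above, and the exponent $-z/[\beta_{d,q}(1+N\tilde{\beta})]$ is non-positive and increases toward $0$ as $N$ grows, so the exponential factor is non-decreasing in $N$ as well; a product of positive non-decreasing functions is non-decreasing. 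I would then note that the two pieces agree at $z=0$ (both give $(N\tilde{\beta}+1)/(N\tilde{\beta}+2)$), so no inconsistency arises at the junction and $N\mapsto\bar{F}_{Z^{(Y)}_N}(z)$ is non-decreasing on all of $\mathbb{R}$. Hence $M>N$ yields $\bar{F}_{Z^{(Y)}_M}(z)\ge\bar{F}_{Z^{(Y)}_N}(z)$ for every $z$, i.e.\ $Z^{(Y)}_M>_{st}Z^{(Y)}_N$.

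For the ``in particular'' claim — the OOB channel gain with the IRS dominates the gain without it — I would not obtain it by plugging $N=0$ into the large-$N$ formula of Theorem~\ref{thm:exact_ccdf} (which is only claimed for $N>0$), but would instead prove $|h_{1,q}|^2>_{st}|h_{2,q}|^2=|h_{d,q}|^2$ directly. Write $h_{1,q}=S_q+h_{d,q}$ with $S_q\triangleq\sum_{n=1}^N f^Y_n g_{q,n}$ independent of $h_{d,q}\sim\mathcal{CN}(0,\beta_{d,q})$. Conditioning on $S_q=s$, the quantity $2|h_{1,q}|^2/\beta_{d,q}$ is a noncentral chi-square random variable with two degrees of freedom and noncentrality parameter $2|s|^2/\beta_{d,q}$, whose CCDF — a first-order Marcum $Q$-function — is non-decreasing in the noncentrality parameter, reducing at noncentrality $0$ to the CCDF of $2|h_{d,q}|^2/\beta_{d,q}$. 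Thus $\Pr(|h_{1,q}|^2>a\mid S_q=s)\ge\Pr(|h_{d,q}|^2>a)$ for all $a$ and all $s$, and averaging over $S_q$ preserves the inequality, giving $|h_{1,q}|^2>_{st}|h_{d,q}|^2$.

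No single hard estimate is involved; the substantive work sits in Theorem~\ref{thm:exact_ccdf}. The places where I would expect the effort to go are: (i) checking the monotonicity in $N$ branch by branch and confirming consistency at $z=0$; and (ii) handling the genuine no-IRS endpoint $N=0$ cleanly for the channel-gain statement — best done, as above, through the stochastic monotonicity of the noncentral chi-square (equivalently, the Marcum $Q$-function) in its noncentrality parameter, rather than by extrapolating the asymptotic CCDF to $N=0$.
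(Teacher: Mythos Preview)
Your approach is the paper's: deduce the ordering $\bar{F}_{Z^{(Y)}_{N'}}(z)\ge\bar{F}_{Z^{(Y)}_{N''}}(z)$ for all $z$ when $N'>N''$ directly from the closed form in Theorem~\ref{thm:exact_ccdf} and invoke the CCDF characterization of $>_{st}$; the paper simply asserts this monotonicity in one line without the branch-by-branch verification you supply. Your separate treatment of the no-IRS endpoint via stochastic monotonicity of the noncentral chi-square in its noncentrality parameter is more careful than the paper, which does not single out $N=0$ even though Theorem~\ref{thm:exact_ccdf} is stated only for $N>0$.
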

\vspace{-0.15cm}
This proposition states that the random variables $\left\{Z^{(Y)}_{n}\right\}_{n\in \mathbb{N}\cup\{0\}}$ form a  sequence of \emph{stochastically larger} random variables as a function of the number of IRS elements, where $\mathbb{N}$ is the set of natural numbers. Thus,  the SNR offset increases with the number of IRS elements even at an OOB UE, i.e., the IRS  almost surely enhances the channel quality at an OOB UE at any point in time. Therefore, the performance of OOB operators \emph{does not degrade} even when the operator is completely oblivious to the presence of the IRS. In fact, this holds true for any number of operator in the area, and hence no operator will be at a disadvantage due to the presence of an IRS being controlled by only one operator. In the next section, we numerically illustrate these points.
\vspace{-0.2cm}
\section{Numerical Results}\label{sec:numerical_results}
\vspace{-0.1cm}
In this section, we validate the analytical results derived and empirically show that an IRS does not cause any degradation in the OOB performance. The BS-X and BS-Y are located at coordinates $(0,200)$, and $(200,0)$ (in metres), the IRS is at $(0,0)$ and UEs are located uniformly at random locations in the rectangular region with diagonally opposite corners $(0,0)$ and  $(200,200)$. \textcolor{black}{The path loss in each link is modeled as $\beta = C_0\left(d_0/d\right)^\alpha$ where $C_0$ is the path loss at the reference distance $d_0$, $d$ is the distance of the link, and $\alpha$ is the path loss exponent. We let $d_0=1$ metre; then at a carrier frequency of $750$ MHz, $C_0=-30$ dB}. We use $\alpha=1.5,2$ and $3$ in the BS X/Y-IRS, IRS-UE and BS X/Y-UE (direct) links, respectively,~similar to~\cite{Yashvanth_TSP_2023}.
The fading channels are randomly generated as per Sec.~\ref{sec:ch_model}.
\begin{figure}[t]
	\centering
	\includegraphics[width=\linewidth]{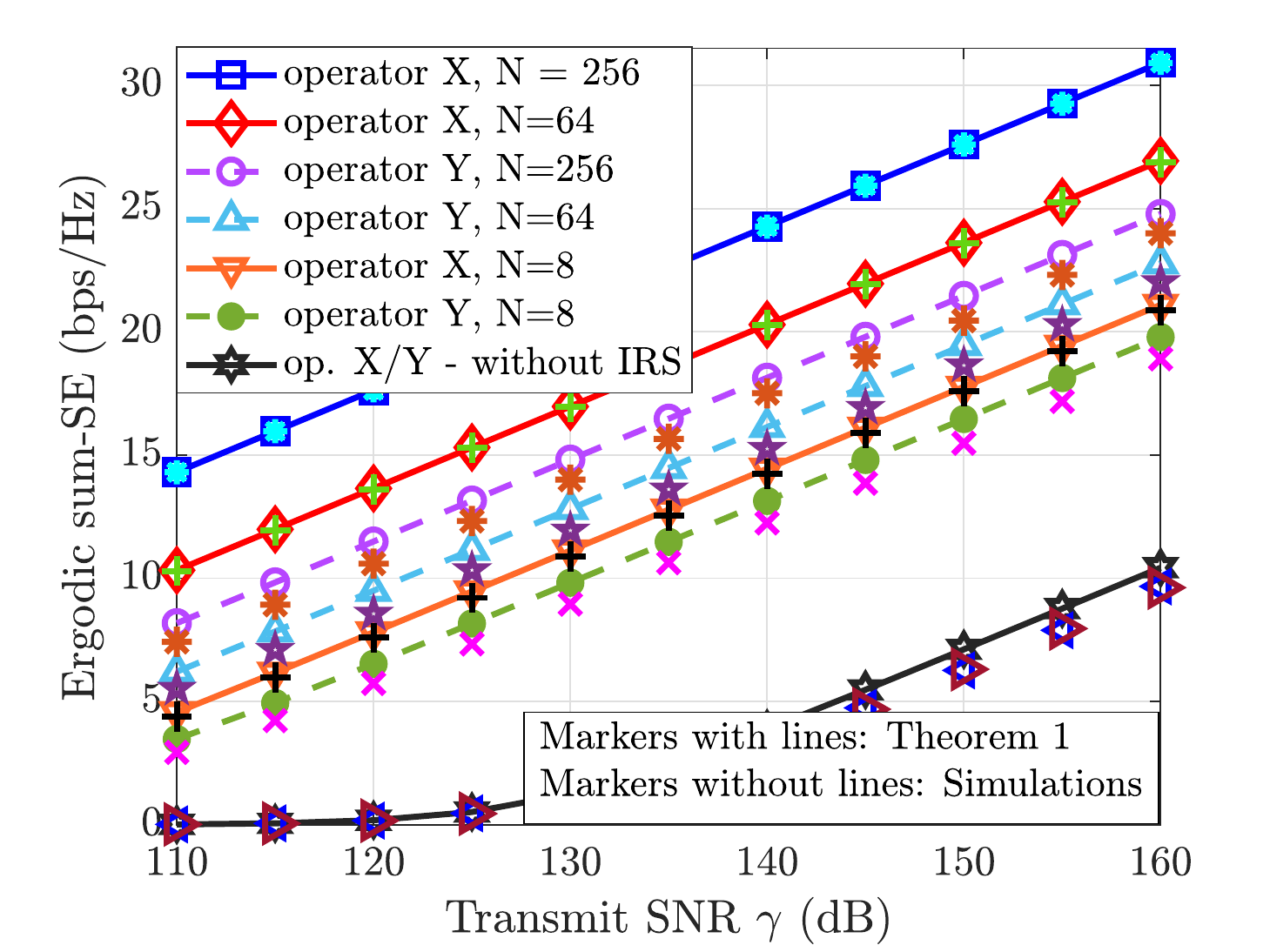}
	\caption{Spectral efficiency vs. transmit SNR.}
	\label{fig:SE_txt_SNR}
	\vspace{-0.55cm}
\end{figure}
\begin{figure}[t]
	\centering
	\includegraphics[width=\linewidth]{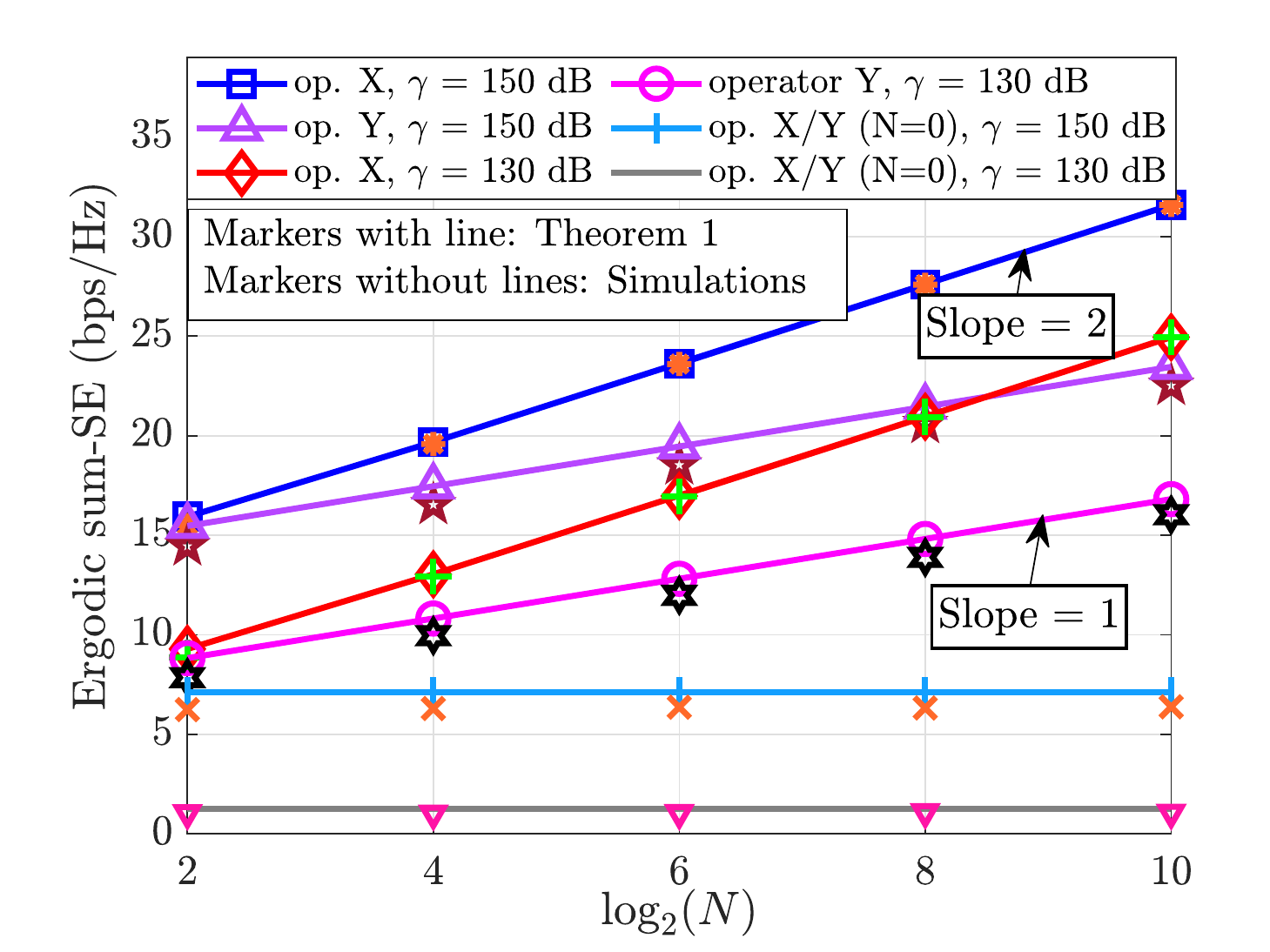}
	\caption{Spectral efficiency vs. $\log_2 (N)$.}
	\label{fig:SE_logN}
	\vspace{-0.7cm}
\end{figure}
In Fig.~\ref{fig:SE_txt_SNR}, we plot the empirical ergodic sum-SE vs. the \emph{transmit} SNR $\gamma \left(\triangleq P/\sigma^2\right)$ for both the operators as a function of the number of IRS elements.\footnote{We use the range of transmit SNR in $110-160$ dB. For e.g., consider $N=64$ and a transmit SNR of $135$ dB. This yields receive SNR $\approx$ $16$, $10$ dB with and without an IRS, respectively, at an OOB UE located at $(100, 100)$.} We also plot the sum-SE obtained from analytical expressions in Theorem~\ref{thm:rate_characterization}. We consider a scenario with $K=Q=10$, and all UEs being served over a total of $1000$ time slots under RR scheduling by the respective operators. Further, the IRS is optimized to serve the UEs of operator X. We see that while IRS uniformly enhances the signal strength for operator X at all SNRs, it also boosts the SNR for any UE served by BS-Y (for any number of IRS elements) which has no control over the IRS phase configurations. This corroborates our observation from Theorem~\ref{thm:rate_characterization} that the IRS does not degrade the OOB performance. Also, the derived analytical expressions tightly match with the simulated values, i.e., the approximation error due to the use of Jensen's inequality is small.\\
\indent Next, in Fig.~\ref{fig:SE_logN}, we examine the effect of the number of IRS elements, $N$, by plotting the ergodic sum-SE vs. $\log_2(N)$ for transmit SNRs of $130$ dB and $150$ dB to validate the scaling of the received SNR as $N^2$ for operator X and as $N$ for operator Y. On the plot we mark the slope of the different curves; and as expected from  Theorem~\ref{thm:rate_characterization}, it clear that while received SNR for a user served by operator X scales as $N^2$, it also scales as $N$ for a user served by operator~Y.\\
\begin{figure}[t]
	\vspace{-0.15cm}
	\centering
	\includegraphics[width=\linewidth]{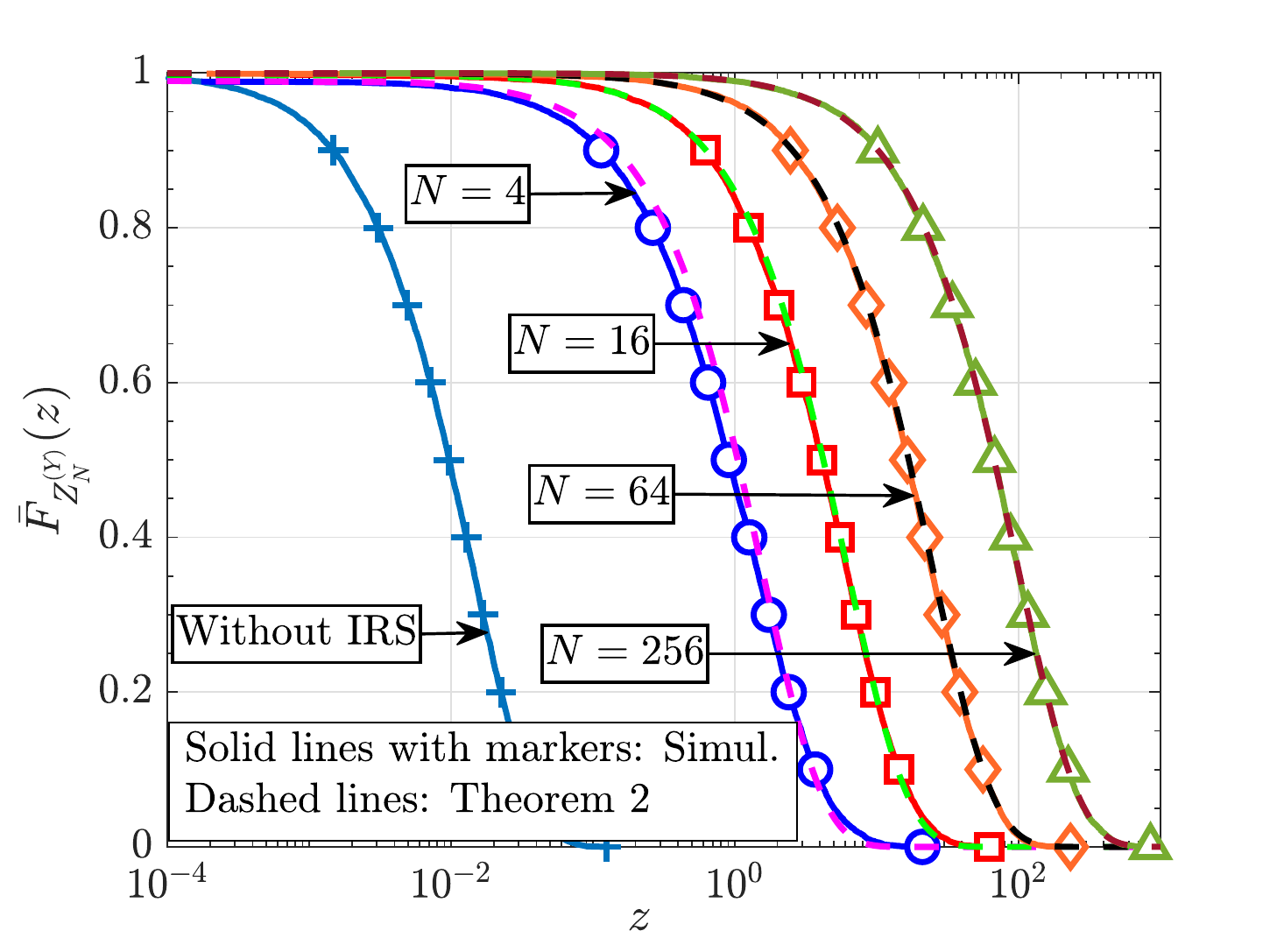}
	\caption{CCDF of $Z^{(Y)}_N$ as a function of $N$.} 
	\label{fig:ccdf}
	\vspace{-0.5cm}
\end{figure}
\indent Finally, we study the effect of the IRS on the OOB operator (namely, Y), by considering the behavior of the random variable $Z^{(Y)}_N$ (see~\eqref{eq:snr_offset}),
which represents the difference in the SNR/channel gain at a UE $q$ (OOB-UE) served by BS-Y with and without the IRS in the environment. In Fig.~\ref{fig:ccdf}, we plot the CCDF of $Z^{(Y)}_N$, given by~\eqref{eq:ccdf_notation}. 
\textcolor{black}{Firstly, we notice that the analytical expression as given in Theorem~\ref{thm:exact_ccdf} matches well with the simulations which validates the accuracy of Theorem~\ref{thm:exact_ccdf} even for smaller values of $N$.} Next, we observe that $Z^{(Y)}_N$ is a non-negative random variable for any $N>0$, which again confirms that almost surely, every possible outcome of the channel gain at an OOB UE \emph{with} an IRS is at least as good as every possible outcome of the channel gain at the same UE \emph{without} an IRS. Finally, we observe that the CCDF shifts to the right as the number of IRS elements is increased. On the same plot, we also show the CCDF of received SNR in the absence of IRS, which is in the left-most curve in the figure. This shows that the probability that an operator benefits from the presence of a randomly configured IRS in the vicinity  increases with $N$, even for operators who do not control the IRS. These observations  confirm our inference from \textcolor{black}{Proposition~\ref{prop:stochastic_dominance_sub_6}}. Further, the instantaneous SNRs witnessed at an arbitrary UE of an OOB operator stochastically dominates the SNR seen by the same UE in the absence of the IRS. Thus, the IRS only enhances the performance of any operator regardless of the frequency band of operation.
\vspace{-0.2cm}
\section{Conclusions}
\vspace{-0.15cm}
In this paper, we analyzed the effect of deploying an IRS on the  performance of an OOB operator that has no control over the IRS. We showed that while the IRS optimally serves the in-band UEs, it simultaneously, and at no additional cost, enhances the quality of the channels of the OOB UEs. This enhancement in the OOB case is a result of reception of multiple copies of the signal at the UEs. Our numerical results corroborate our theoretical results, and we conclude that deployment of an IRS benefits all the co-existing network operators, albeit to a lesser extent than the operator that has control over the IRS configurations. Future work can include the effect of multiple antennas at the BSs/UEs, and consider better scheduling schemes. In particular, with opportunistic scheduling, and if sufficiently many users are served by operator Y, any IRS configuration selected for UE served by BS-X can be near-optimal to some UE served by BS-Y also. Selecting and serving such an UE can procure near-optimal benefits from the IRS to both operators.
\vspace{-0.15cm}
\begin{appendices}
\renewcommand{\thesectiondis}[2]{\Alph{section}:}
\vspace{-0.15cm}
\section{Proof of Theorem~\ref{thm:rate_characterization}}\label{sec:app-1}
\vspace{-0.1cm}
	 \subsection{System ergodic sum-SE of operator X}	 
	  We first compute $\langle R_k^{(X)} \rangle$ for a given $k$. By Jensen's inequality, we obtain
	  \vspace{-0.1cm}
		\begin{equation}\label{eq:airtel_jensen}
		\vspace{-0.1cm}
			\langle R_k^{(X)} \rangle \leq	
			\log_{2}\!\left(\!1 \!+\! \mathbb{E}\left[\left|\sum\nolimits_{n=1}^N  |f^X_{n}{g_{k,n}}| 
				+   |h_{d,k}| \right|^2\right]\! \frac{P}{\sigma^2}\! \right)\!.
				\vspace{-0.1cm}
		\end{equation}
		We expand the expectation term as follows.
		\vspace{-0.1cm}
		\begin{multline}\label{eq:op_X_mean_expand}
		\vspace{-0.2cm}
			\left|\sum_{n=1}^N  |f^X_{n}{g_{k,n}}| 
			+   |h_{d,k}| \right|^2 = \sum_{n,m=1}^N |f^X_{n}||g_{k,n}||f^X_{m}||g_{k,m}| +  \\ |h_{d,k}|^2+ 2\!\left(\sum_{n=1}^N |f^X_{n}||g_{k,n}||h_{d,k}| \!\right) \!=\! |h_{d,k}|^2 + \!\sum_{n=1}^{N}|f^X_{n}|^2|g_{k,n}|^2  \\+
			\mathop{\sum_{n,m=1}^{N}}_{n \neq m} |f^X_{n}||g_{k,n}||f^X_{m}||g_{k,m}|  + 2\sum_{n=1}^N |f^X_{n}||g_{k,n}||h_{d,k}|.
			\vspace{-0.5cm}
			\end{multline}
Under Rayleigh fading, $\mathbb{E}[|f_n^X|^2] = \beta_{\mathbf{f}^X}, \mathbb{E}[|g_{k,n}|^2] = \beta_{\mathbf{g},k}, \mathbb{E}[|h_{d,k}|^2] = \beta_{d,k}$, $ \mathbb{E}[|f_n^X|] = \sqrt{\dfrac{\pi}{4}\beta_{\mathbf{f}^X}},$ $ \mathbb{E}[|g_{k,n}|] = \sqrt{\dfrac{\pi}{4}\beta_{\mathbf{g},k}}, \mathbb{E}[|h_{d,k}|]   = \sqrt{\dfrac{\pi}{4}\beta_{d,k}}, \hspace{0.1cm}\forall k \in [K], n \in [N]$. Further, all the random variables are independent. Taking the expectation in~\eqref{eq:op_X_mean_expand}, and substituting for these values, we get 
\vspace{-0.2cm}
		\begin{multline}\label{eq:expectation_iid_airtel}
		\vspace{-0.1cm}
			\mathbb{E}\left[\left|\sum\nolimits_{n=1}^N  |f^X_{n}{g_{k,n}}| 
			+   |h_{d,k}| \right|^2\right] = 
			N^2\left(\frac{\pi^2}{16} \beta_{r,k}\right) \\ + N\left(\beta_{r,k}-\frac{\pi^2}{16}\beta_{r,k}+\frac{\pi^{3/2}}{4}\sqrt{\beta_{d,k}\beta_{r,k}}\right) +\beta_{d,k}.
			\vspace{-0.1cm}
				\end{multline}
Substituting~\eqref{eq:expectation_iid_airtel} in~\eqref{eq:airtel_jensen}, and plugging in the resulting expression in~\eqref{eq:sum-rate-template} yields~\eqref{eq:rate_airtel_rr}, as desired.
\vspace{-0.4cm}	
	 \subsection{System ergodic sum-SE of operator Y}	 
As above, from Jensen's inequality, we have
\vspace{-0.1cm}
\begin{equation}\label{eq:jio_jensen}
\vspace{-0.2cm}
\langle R_q^{(Y)} \rangle \leq \log_{2}\left(\!1 + \mathbb{E}\left[\left|\sum\nolimits_{n=1}^N  f^Y_{n}{g_{q,n}} 
				+   h_{d,q} \right|^2\right] \frac{P}{\sigma^2} \right)\!.
			\end{equation}
		Proceeding along the same lines and simplifying, we get
			\vspace{-0.1cm}		
		\begin{equation}\label{eq:expectation_iid_jio}
		\vspace{-0.1cm}
			\mathbb{E}\left[\left|\sum\nolimits_{n=1}^N  f^Y_{n}{g_{q,n}} 
			+   h_{d,q} \right|^2\right] = N\beta_{r,q} + \beta_{d,k}.
		\vspace{-0.1cm}
		\end{equation}
		Substituting~\eqref{eq:expectation_iid_jio}  in~\eqref{eq:jio_jensen}, and plugging in the resulting expression in~\eqref{eq:sum-rate-template} yields~\eqref{eq:rate_jio_rr}.	 \qed
		\vspace{-0.36cm}		
		\section{Proof of Theorem~\ref{thm:exact_ccdf}}\label{sec:ccdf_proof}
		For large $N$, we have $h_{1,q} \sim \mathcal{CN}(0,N\beta_{r,q}+\beta_{d,q})$~\cite[Proposition~$1$]{Yashvanth_TSP_2023}.\footnote{We consider large $N$ for the sake of analytical tractability. We
		showed in~\cite{Yashvanth_TSP_2023} that this approximation works well even with $N = 8$.} So, $|{h}_{1,q}|^2 \sim \exp(1/\left(N\beta_{r,q}+\beta_{d,q}\right))$, and $|{h}_{2,q}|^2\sim \exp(1/\beta_{d,q})$. Define the real-valued random variables $\tilde{h}_{1,q} \triangleq |h_{1,q}|^2$, and $\tilde{h}_{2,q} \triangleq |h_{2,q}|^2$. Hence, $Z^{(Y)}_N$ is the difference of two non-identically and exponentially distributed random variables. We first compute the correlation coefficient between $\tilde{h}_{1,q}$ and $\tilde{h}_{2,q}$. 
The correlation coefficient is given by
\vspace{-0.2cm}
			\begin{equation}\label{eq:corr_coeff}
				\vspace{-0.2cm}
		\rho_{12} \triangleq \left(\mathbb{E}\left[(\tilde{h}_{1,q}-\mathbb{E}[\tilde{h}_{1,q}])(\tilde{h}_{2,q}-\mathbb{E}[\tilde{h}_{2,q}])\right]\right)/ \sigma_1\sigma_2,
\end{equation} where $\sigma_1^2$ and $\sigma_2^2$ are the variances of $\tilde{h}_{1,q}$ and $\tilde{h}_{2,q}$, respectively. We also note $\mu_1 \triangleq \mathbb{E}[\tilde{h}_{1,q}] = N\beta_{r,q} + \beta_{d,q}$, $\mu_2 \triangleq \mathbb{E}[\tilde{h}_{2,q}] =  \beta_{d,q}$, $\sigma_1^2 = {\left(N\beta_{r,q} + \beta_{d,q}\right)}^2$, $\sigma_2^2 =  \beta_{d,q}^2$.
 Thus, from~\eqref{eq:corr_coeff}, we get
 \vspace{-0.1cm}
\begin{equation}
	\vspace{-0.1cm}
	\rho_{12} = \frac{\mathbb{E}[\tilde{h}_{1,q}\tilde{h}_{2,q}] - \left(N\beta_{r,q} + \beta_{d,q}\right)\beta_{d,q} }{{\left(N\beta_{r,q} + \beta_{d,q}\right)}\beta_{d,q}}.
\end{equation} 
Using the  expressions for $\tilde{h}_{1,q}$ and $\tilde{h}_{2,q}$ from~\eqref{eq:aux_RV_defn}, we can verify that $\mathbb{E}[\tilde{h}_{1,q}\tilde{h}_{2,q}] = N\beta_{r,q}\beta_{d,q} + 2\beta_{d,q}^2$, and obtain
\vspace{-0.1cm}
\begin{equation}\label{eq:rho_value}
	\vspace{-0.1cm}
	\rho_{12} = 1\left.\middle/\right. \left(1 + N\left(\beta_{r,q}/\beta_{d,q}\right) \right).
	\vspace{-0.1cm}
\end{equation} 
Clearly, $\rho_{12}$ decays inversely with $N$. We now use a result from~\cite{Simon_2002_ProbabilityGaussian} which characterizes the distribution of the difference of two dependent and non-identically distributed chi-square random variables and obtain the CDF of $Z^{(Y)}_N$ as
\vspace{-0.2cm}
\begin{equation}\label{eq:cdf_simon}
 	F_{Z^{(Y)}_N}(z) = \left\{
 	\begin{array}{lr}
	 		\!\dfrac{8}{\mu_1\mu_2(1\!-\rho_{12}^2)\gamma\alpha^-}e^{\left(\frac{\alpha^- z}{4}\right)}, & \text{if } z < 0,\\ 
	 		1 - \dfrac{8}{\mu_1\mu_2(1-\rho_{12}^2)\gamma\alpha^+}e^{-\left(\frac{\alpha^+ z}{4}\right)}, & \text{if } z\geq 0.
	 	\end{array}
 	\right.
 	\vspace{-0.4cm}
\end{equation} 
where 
\begin{equation}\label{eq:cdf_simon_parameters}
\!	\!\gamma\! = \!\dfrac{\!2\sqrt{{\!(\mu_2\! -\!\mu_1)}^2\!\!+\!4\mu_1\mu_2(1\!-\!\rho_{12}^2)}}{\mu_1\mu_2(1\!-\rho_{12}^2)\!},	\alpha^\pm\!=\! \gamma \!\pm\! \dfrac{2\left(\mu_2\!-\!\mu_1\right)}{\mu_1\mu_2(1\!-\!\rho_{12}^2)},
\end{equation} We simplify the CDF by considering a large $N$; and let $\rho_{12} \rightarrow 0$ in~\eqref{eq:cdf_simon}, \eqref{eq:cdf_simon_parameters} as per~\eqref{eq:rho_value}. Finally, the CCDF of $Z^{(Y)}_N$, $\bar{F}_{Z^{(Y)}_N}(z) = 1 - F_{Z^{(Y)}_N}(z)$ is obtained  as
\vspace{-0.2cm}
\begin{equation} \label{eq:CCDF_Simon1}
	\vspace{-0.1cm}
	\bar{F}_{Z^{(Y)}_N}(z) = \left\{
	\begin{array}{lr}
		1-\dfrac{\mu_2}{\mu_1+\mu_2}e^{\frac{z}{ \mu_2}}, & \text{if } z < 0, \\ 
		 \dfrac{\mu_1}{\mu_1+\mu_2}e^{-\frac{z}{ \mu_1}}, & \text{if } z\geq 0,
	\end{array}
	\right.
	\vspace{-0.05cm}
\end{equation} 
Substituting for $\mu_1$ and $\mu_2$ into \eqref{eq:CCDF_Simon1} completes the proof. \qed
\vspace{-0.3cm}
\end{appendices}
\vspace{-0.3cm}
\bibliographystyle{IEEEtran}
\bibliography{IEEEabrv,IRS_ref_short}
\vspace{-0.1cm}
\end{document}